\title{Notes on the Sum-Rank Weight of a Matrix with Restricted Rank}
\author{Hugo Sauerbier Couv\'ee, Hedongliang Liu}
\date{Technical University of Munich\\[2ex]\today}
\titleformat{\chapter}[display]
  {\normalfont\bfseries}{}{0pt}{\Huge}
\newtheorem{theorem}{Theorem}[]
\newtheorem{corollary}{Corollary}
\newtheorem{lemma}{Lemma}
\theoremstyle{remark}
\newcommand{\F}{\mathbb{F}}
\newcommand{\N}{\mathbb{N}}
\newcommand{\Gr}{\operatorname{Gr}}
\definecolor{TUMBlue}{RGB}{0,101,189} 
\definecolor{TUMBlueDark}{RGB}{0,82,147} 
\definecolor{TUMBlueLight}{RGB}{152,198,234} 
\definecolor{TUMBlueMedium}{RGB}{100,160,200} 
\definecolor{TUMIvory}{RGB}{218,215,203} 
\definecolor{TUMGreen}{RGB}{162,173,0} 
\definecolor{TUMGray}{gray}{0.6} 
\definecolor{TUMOrange}{RGB}{227,114,34} 
\definecolor{TUMGreenDark}{RGB}{0,124,48} 
\definecolor{TUMRed}{RGB}{196,7,27} 
\definecolor{TUMPink}{RGB}{181,92,165}
\definecolor{TUMPinkDark}{RGB}{155,70,141}
\definecolor{TUMPink1}{RGB}{198, 128, 187}
\definecolor{TUMPink2}{RGB}{214, 164, 206}
\definecolor{TUMPink3}{RGB}{230, 199, 225}
\definecolor{TUMPink4}{RGB}{246, 234, 244}
\def\ve#1{{\mathchoice{\mbox{\boldmath$\displaystyle #1$}}%
              {\mbox{\boldmath$\textstyle #1$}}%
              {\mbox{\boldmath$\scriptstyle #1$}}%
              {\mbox{\boldmath$\scriptscriptstyle #1$}}}}
\newcommand{\Fq}{\ensuremath{\mathbb{F}_q}}
\newcommand{\bbN}{\ensuremath{\mathbb{N}}}
\newcommand{\bn}{\ve{n}}
\newcommand{\bA}{\ve{A}}
\newcommand{\bM}{\ve{M}}
\newcommand{\sfc}{\mathsf{c}}
\newcommand{\myspan}[1]{\ensuremath{\left\langle #1\right\rangle}}
\DeclareMathOperator{\rank}{rank}
\newcommand{\wtSR}[1]{\ensuremath{\mathrm{wt}_{\mathsf{SR},#1}}}
\begin{document}

\maketitle

\begin{abstract}
These notes cover a few calculations regarding the sum-rank weight of a matrix in relation to its rank. In particular, a formula and lower bounds are given on the probability that a matrix of rank $t$ consisting of $\ell$ blocks has sum-rank weight $\ell t$.
\end{abstract}
\section{Notations and Preliminaries}

For an $a\in\bbN$, denote $[a]:=\{1,2,\dots, a\}$.
Throughout this note we consider matrices over the finite field $\F_q$, where $q$ is a prime power.\\
\\
For $t \in \N$, we define
$Q_t(x) := (x-1)(x-q)\cdots(x-q^{t-1}) = \prod_{i=0}^{t-1}(x-q^i)$. Numbers of the form $Q_t(q^r)$ show up often when calculating the number of subspaces and matrices with certain properties, see also \cite{matrices_finite},\cite{gadouleau},\cite{counting_matrices},\cite{Landsberg1893},\cite{migler2004weight},\cite{WANG2010297}.
Denote by $\Gr_{t}(\Fq^m)$ the \emph{Grassmanian} (a set of all $t$-dim subspaces of $\Fq^m$). The cardinality of the Grassmanian $|\Gr_t(\Fq^m)|$ is denoted by $\begin{bmatrix}m\\t\end{bmatrix}_q$, which is often called the \emph{Gaussian binomial} or \emph{$q$-binomial}.
\begin{lemma}
  \label{lem:num-r-dim-subspaces}  Let $t,m\in\bbN$ and $m \geq t$. Then
\[
\begin{bmatrix}m\\t\end{bmatrix}_q = \frac{Q_t(q^m)}{Q_t(q^t)}.
\]
\end{lemma}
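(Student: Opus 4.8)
The plan is to count ordered, linearly independent $t$-tuples of vectors in $\Fq^m$ in two different ways and then take the quotient. First I would count the number $N$ of ordered $t$-tuples $(v_1, \dots, v_t)$ of linearly independent vectors in $\Fq^m$. Building such a tuple one vector at a time, $v_1$ may be any nonzero vector, giving $q^m - 1$ choices; having chosen $v_1, \dots, v_{i-1}$ (which span an $(i-1)$-dimensional subspace containing $q^{i-1}$ elements), the next vector $v_i$ must lie outside that span, giving $q^m - q^{i-1}$ choices. Multiplying over $i = 1, \dots, t$ yields
\[
N = \prod_{i=0}^{t-1}(q^m - q^i) = Q_t(q^m).
\]

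Next I would observe that each such tuple is an ordered basis of the $t$-dimensional subspace it spans, and conversely every subspace $U \in \Gr_t(\Fq^m)$ arises in this way. The key step is that the number of ordered bases is the same for every $U$: since $U \cong \Fq^t$ as an $\Fq$-vector space, the identical successive-choice argument applied within $U$ (that is, replacing $m$ by $t$) shows that $U$ has exactly
\[
\prod_{i=0}^{t-1}(q^t - q^i) = Q_t(q^t)
\]
ordered bases. Hence the map sending a tuple to its span partitions the $N$ tuples into fibers of equal size $Q_t(q^t)$, one fiber per element of $\Gr_t(\Fq^m)$.

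Counting $N$ in both ways gives $N = |\Gr_t(\Fq^m)| \cdot Q_t(q^t)$, and solving for the cardinality yields
\[
\begin{bmatrix}m\\t\end{bmatrix}_q = |\Gr_t(\Fq^m)| = \frac{Q_t(q^m)}{Q_t(q^t)}.
\]
The only point requiring care is the uniformity of the fiber sizes --- that every $t$-dimensional subspace has the same number of ordered bases --- which holds because the basis count within $U$ depends only on $\dim U = t$ and not on how $U$ sits inside $\Fq^m$; the hypothesis $m \geq t$ guarantees that $t$-dimensional subspaces exist, so this count is taken over a nonempty family.
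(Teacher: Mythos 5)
Your proof is correct and follows essentially the same route as the paper's: count the $Q_t(q^m)$ ordered tuples of $t$ linearly independent vectors in $\Fq^m$, then divide by $Q_t(q^t)$, the number of ordered bases of any fixed $t$-dimensional subspace. You spell out the equal-fiber-size argument more explicitly than the paper does, which is a welcome precision but not a different method.
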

\begin{proof}
Here, $Q_t(q^m) = (q^m-1)(q^m-q)\cdots(q^m-q^{t-1})$ gives the total number of ordered bases consisting of $t$ linearly independent vectors of $\F_q^m$. This overcounts the number of $t$-dim subspaces, so we have to divide it by $Q_t(q^t)$, the number of different ordered bases that span the same fixed $t$-dim subspace, which is also the number of invertible $t \times t$ matrices.
\end{proof}
\newpage
\begin{lemma}
  \label{lem:num-matrices-fixed-column-space}
  Fix an $t$-dim subspace $V \subseteq \F_q^m$. Then the number of $m\times n$ matrices with column space equal to $V$ is given by
\begin{align*}
\#\{\text{\emph{$m \times n$ matrices with colsp. $V$}}\} &= \#\{\text{\emph{lin. maps $\F_q^n \to \F_q^m$ with image $V$}}\}\\
&= \#\{\text{\emph{surjective lin. maps $\F_q^n \to V \cong \F_q^t$}}\}\\
&= \#\{\text{\emph{$t \times n$ matrices with rank $t$}}\}\\
&= \#\text{\emph{ways to choose $t$ lin. indep. vectors from $\F_q^n$ (ordered)}}\\
&= Q_t(q^n).
\end{align*}
\end{lemma}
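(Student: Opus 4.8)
The statement is a chain of five equalities, each asserting an equality of cardinalities between successive sets, so my plan is to justify each link as a bijection and then carry out the final count. For the first link I would invoke the standard correspondence between $m \times n$ matrices over $\F_q$ and linear maps $\F_q^n \to \F_q^m$: a matrix $\bM$ acts as $\bx \mapsto \bM\bx$ on column vectors, its image is exactly its column space, and this assignment is a bijection. Hence counting matrices with column space $V$ is the same as counting linear maps $\F_q^n \to \F_q^m$ with image $V$.

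For the second link I would observe that any linear map $f : \F_q^n \to \F_q^m$ with image $V$ factors uniquely as a surjection $\F_q^n \twoheadrightarrow V$ followed by the inclusion $\iota : V \hookrightarrow \F_q^m$. Since $\iota$ is determined once $V$ is fixed, restricting the codomain gives a bijection onto the set of surjective maps $\F_q^n \to V$. Choosing a basis of $V$ yields an isomorphism $V \cong \F_q^t$, under which surjections onto $V$ correspond bijectively to surjections $\F_q^n \to \F_q^t$; in particular the final count will not depend on which $V$ was fixed, as the statement implicitly asserts.

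For the third and fourth links, a surjective linear map $\F_q^n \to \F_q^t$ is represented in the standard bases by a $t \times n$ matrix, and surjectivity is equivalent to the rank equalling $t$, the dimension of the codomain. A $t \times n$ matrix of rank $t$ has full row rank, so its $t$ rows form an ordered tuple of linearly independent vectors in $\F_q^n$, and conversely every such tuple gives a distinct rank-$t$ matrix; this is the bijection asserted in the fourth line.

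Finally I would count the ordered $t$-tuples of linearly independent vectors in $\F_q^n$ by the same greedy argument used in \cref{lem:num-r-dim-subspaces}: the $i$-th vector may be any vector outside the span of the previously chosen $i-1$ vectors, giving $q^n - q^{i-1}$ choices for $i = 1, \dots, t$, so the total is $\prod_{i=0}^{t-1}(q^n - q^i) = Q_t(q^n)$. None of the links is genuinely difficult; the only points requiring care are the factorization in the second link, where one must check that fixing $\iota$ really does reduce the problem to counting surjections onto an abstract $t$-dimensional space independently of $V$, and the row-versus-column bookkeeping in the third and fourth links, so that ``rank $t$'' is correctly read as ``linearly independent rows.'' The one substantive computation is the concluding greedy count.
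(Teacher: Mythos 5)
Your proposal is correct and follows exactly the same route as the paper, whose proof is precisely this chain of equalities stated without elaboration; you have simply made each bijection explicit (the matrix--linear-map correspondence, the factorization through $V$, surjectivity as full rank, rows as independent tuples) and carried out the same greedy count $\prod_{i=0}^{t-1}(q^n-q^i)=Q_t(q^n)$. No gaps.
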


\begin{lemma}\label{lem:number-matrices-rank-r}
Let $t,m,n \in\bbN$ and $m\geq t, n\geq t$. Then
\begin{align*}
\#\{\text{\emph{$m \times n$ matrices of rank $t$ over $\Fq$}}\} & = \#\{\text{\emph{$t$-dim subspaces of $\F_q^m$}}\} \\
&\text{}  \quad \cdot\;\#\{\text{\emph{$m \times n$ matrices whose colsp. is a fixed $t$-dim subspace}}\} \\
&= \begin{bmatrix}m\\t\end{bmatrix}_q {Q_t(q^n)}\\
&= \frac{Q_t(q^m)Q_t(q^n)}{Q_t(q^t)}.
\end{align*}
\end{lemma}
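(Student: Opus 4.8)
The plan is to count rank-$t$ matrices by partitioning them according to their column space, which is the standard fibering argument. First I would observe that any $m \times n$ matrix $M$ of rank $t$ has a column space that is a well-defined $t$-dimensional subspace of $\F_q^m$; the hypotheses $m \geq t$ and $n \geq t$ guarantee that rank $t$ is actually achievable. Thus the assignment sending a rank-$t$ matrix to its column space is a well-defined map from the set of $m \times n$ rank-$t$ matrices into $\Gr_t(\F_q^m)$, and it partitions the former set into fibers indexed by the subspaces in the latter.

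Next I would invoke \Cref{lem:num-matrices-fixed-column-space}, which states that for any fixed $t$-dimensional subspace $V$, the number of $m \times n$ matrices with column space equal to $V$ is $Q_t(q^n)$. The crucial point is that this count depends only on $t$ and $n$, and not on the particular choice of $V$; hence every fiber of the partition has the same cardinality $Q_t(q^n)$. By the counting-by-fibers principle, the total number of rank-$t$ matrices is therefore the number of fibers multiplied by this common fiber size, which justifies the product rather than a sum and yields the first two displayed equalities once we note that the number of fibers is $|\Gr_t(\F_q^m)| = \begin{bmatrix}m\\t\end{bmatrix}_q$ by definition.

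Finally I would substitute the closed form $\begin{bmatrix}m\\t\end{bmatrix}_q = Q_t(q^m)/Q_t(q^t)$ supplied by \Cref{lem:num-r-dim-subspaces} to rewrite $\begin{bmatrix}m\\t\end{bmatrix}_q Q_t(q^n)$ as $Q_t(q^m)Q_t(q^n)/Q_t(q^t)$, giving the last expression. I do not expect any serious obstacle, as both ingredients are already established; the only point deserving explicit mention is the independence of the fiber size on $V$, since that is exactly what permits the multiplicative count. The assumptions $m \geq t$ and $n \geq t$ ensure all the $q$-analogues are well-defined and that the fibers are nonempty.
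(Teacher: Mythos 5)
Your proposal is correct and matches the paper's argument exactly: the paper likewise partitions rank-$t$ matrices by their column space, counts each fiber as $Q_t(q^n)$ via \cref{lem:num-matrices-fixed-column-space}, and substitutes the Gaussian binomial formula from \cref{lem:num-r-dim-subspaces}. Your explicit remark that the fiber size is independent of the subspace $V$ is a welcome precision that the paper leaves implicit.
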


\section{Sum-rank weight of a matrix with restricted rank}
In this section, we discuss about the \emph{sum-rank weight} of a matrix with restricted rank.

Let $\ell,n \in\bbN$ and $\bn_\ell= (n_1, \dots,n_\ell)\in\bbN^\ell$ be an ordered partition of $n=\sum_{i=1}^\ell n_i$.
For a matrix 
\begin{align*}
  \bA =
  \begin{pmatrix}
    \bA_1 & \bA_2 & \dots & \bA_\ell
  \end{pmatrix}
                        \in\Fq^{m\times n}
\end{align*}
where $\bA_i\in\Fq^{m\times n_i}$.
The sum-rank weight w.r.t.~$\bn_\ell$ is
\begin{align*}
  \wtSR{\bn_\ell}(\bA)= \sum_{i=1}^{\ell} \rank(\bA_i)
\end{align*}
We can find the following relation between the sum-rank weight and the rank of a matrix.
\begin{lemma}
  \label{lem:rank-leq-sumrank-mat}
  For a matrix $\bA\in\Fq^{m\times n}$ and an ordered partition $\bn_\ell= (n_1, \dots,n_\ell)$ of $n$, $\rank(\bA)\leq\wtSR{\bn_{\ell}}(\bA)\leq \ell\cdot \rank(\bA)$.
\end{lemma}
\begin{proof}
  Denote by $\myspan{\bA}_{\sfc}$ the column space of a matrix $\bA$. For the first inequality,
  \begin{align*}
    \rank(\bA) = \dim(\myspan{\bA}_{\sfc}) &= \dim(\myspan{\bA_1}_{\sfc}+\cdots + \myspan{\bA_\ell}_{\sfc})\\
                                       &\leq \dim(\myspan{\bA_1}_{\sfc}) + \cdots \dim(\myspan{\bA_\ell}_{\sfc})\\
    &=\rank(\bA_1) +\cdots +\rank(\bA_\ell) = \wtSR{\bn_\ell}(\bA)\ .
  \end{align*}
  For the second inequality,
  \begin{align*}
    \wtSR{\bn_\ell}(\bA)= \sum_{i=1}^{\ell}\rank(\bA_i) 
    \leq \sum_{i=1}^{\ell}\rank(\bA) =\ell\cdot \rank(\bA).
  \end{align*}
\end{proof}
\subsection{Probability of a rank-restricted matrix attaining the maximal sum-rank weight}
In the following, we consider a matrix $\bA\in\Fq^{m\times n}$ with restricted rank, $\rank(\bA)=t$.
The goal is to derive an expression on the probability of $\wtSR{\bn_\ell}(\bA)=\ell\cdot \rank(\bA)$ w.r.t.~an ordered partition $\bn_\ell= (n_1, \dots,n_\ell)$ of $n$ where $m\geq t$ and $n_i\geq t,\forall i\in[\ell]$.

\begin{lemma}
  \label{lem:minimal-ordered partition}
  Let $m,n,t,\ell\in\bbN$ and $q$ be a prime power. For any ordered partition $\bn_\ell= (n_1, \dots,n_\ell)$ of $n$ where $m\geq t$ and $n_i\geq t,\forall i\in[\ell]$ we have
  \begin{align*}
    Q_t(q^{n_1})Q_t(q^{n_2})\cdots Q_t(q^{n_\ell})\geq Q_t(q^t)^{\ell-1}Q_t(q^{n-(\ell-1)t})\ .
  \end{align*}
  In other words, $(\underbrace{t, t,\dots,t}_{\ell-1}, n-(\ell-1)t)$ (or any permutation) is an ordered partition of $n$ that minimizes $\prod_{i=1}^{\ell}Q_t(q^{n_i})$.
\end{lemma}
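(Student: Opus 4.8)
The plan is to prove the inequality by a pairwise \emph{mass-transfer} (majorization) argument, which reduces the entire statement to a single two-part exchange inequality. Conceptually, the reason the partition $(t,\dots,t,n-(\ell-1)t)$ wins is that $a \mapsto \log Q_t(q^{a})$ is concave in $a$, so the product $\prod_{i} Q_t(q^{n_i})$ is minimized at a vertex of the feasible polytope $\{\bn_\ell : \sum_i n_i = n,\ n_i \geq t\}$; one checks that $(n-(\ell-1)t,t,\dots,t)$ majorizes every feasible partition, and Karamata's inequality for the concave function then yields the bound. I would, however, present the self-contained exchange version rather than invoke Karamata directly.

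First I would isolate the key technical lemma: the ratio
\[
g(a) := \frac{Q_t(q^{a+1})}{Q_t(q^{a})} = \prod_{i=0}^{t-1} \frac{q^{a+1}-q^{i}}{q^{a}-q^{i}}
\]
is strictly decreasing in $a$ for $a \geq t$. The computation driving this is the per-factor identity $\frac{q^{a+1}-q^{i}}{q^{a}-q^{i}} = q + \frac{q^{i}(q-1)}{q^{a}-q^{i}}$; since $q>1$ and $0\le i\le t-1 < a$, each summand $\frac{q^{i}(q-1)}{q^{a}-q^{i}}$ is positive and strictly decreasing in $a$, hence each factor decreases and so does their product. I would note that every $Q_t(q^{a})$ with $a\ge t$ is strictly positive, which is what permits dividing freely. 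This monotonicity step is the heart of the argument and the only place where the multiplicative structure of $Q_t$ is genuinely used, so I expect it to be the main (though short) obstacle.

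Next I would deduce the exchange inequality: for integers $a \geq b$ with $b > t$ (so that $b-1 \geq t$),
\[
Q_t(q^{a+1})\,Q_t(q^{b-1}) \leq Q_t(q^{a})\,Q_t(q^{b}).
\]
This is equivalent to $g(a) \leq g(b-1)$, which holds because $g$ is decreasing and $a \geq b-1$. Combinatorially: moving one unit from a part of size $b$ to a weakly larger part of size $a$ never increases $\prod_i Q_t(q^{n_i})$, as long as the shrinking part stays $\geq t$.

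Finally I would run the transfer to completion. Starting from an arbitrary feasible $\bn_\ell$, I designate a largest part as an accumulator and repeatedly apply the exchange, each step moving one unit into the accumulator from some other part that still exceeds $t$. Since $\sum_i n_i = n$ is fixed, every part remains $\geq t$; the condition $a\ge b$ is preserved because the accumulator only grows and stays maximal; and the process terminates after finitely many steps exactly at the partition with all non-accumulator parts equal to $t$ and the accumulator equal to $n-(\ell-1)t$ (this value is $\ge t$ because feasibility forces $n \ge \ell t$). As the product is non-increasing at each step, this terminal partition minimizes $\prod_i Q_t(q^{n_i})$, which is the claimed inequality; permutation-invariance of the product handles the ``or any permutation'' remark, and the $\ell=1$ case is the trivial equality. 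The only fiddly point is the bookkeeping that the constraint $n_i \geq t$ is never violated, which is immediate here since we only ever decrement parts that are strictly larger than $t$.
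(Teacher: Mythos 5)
Your proposal is correct and takes essentially the same route as the paper: both arguments reduce everything to the two-block exchange inequality $Q_t(q^{a})\,Q_t(q^{b}) \geq Q_t(q^{a+1})\,Q_t(q^{b-1})$ for $a \geq b > t$ (the paper verifies it by directly expanding the factors $(q^a-q^i)(q^b-q^i)-(q^{a+1}-q^i)(q^{b-1}-q^i)$, while you prove monotonicity of the ratio $g(a)=Q_t(q^{a+1})/Q_t(q^{a})$ --- the same factor-wise computation, reorganized) and then iterate it, the paper by induction on $\ell$ and you by a unit-transfer process into an accumulator, which is equivalent bookkeeping.
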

\begin{proof}
  For $\ell=1$ the statement holds trivially.

  For $\ell=2$, for any $a,b\in\bbN, a+b=n, a\geq b\geq t+1$, and $i=1,\dots, t-1$, we have
  \begin{align*}
    &(q^a-q^i)(q^b-q^i) - (q^{a+1}-q^i)(q^{b-1}-q^i) \\
    =& (q^{a+b}-q^{i+a}-q^{i+b}+q^{2i}) - (q^{a+b}-q^{i+a+1}-q^{i+b-1}+q^{2i})\\
    =& q^{i+a+1} -q^{i+a} + q^{i+b-1} - q^{i+b}\\
    =& q^{i+b-1}((q^{a-b+2}-q^{a-b+1}) - (q-1))\\
    =& q^{i+b-1} (q-1) (q^{a-b+1} - 1) > 0\ ,
  \end{align*}
  The last inequality holds because $q>1$ and $a \geq b$. This shows that 
  \[
    Q_t(q^a)Q_t(q^b) > Q_t(q^{a+1})Q_t(q^{b-1}).
  \]
Then for any $a,b\in\bbN, a+b=n, a\geq t, b\geq t$, we have
      \begin{align}
        \label{eq:inequality-ell=two}
        Q_t(q^a)Q_t(q^b) \geq Q_t(q^{n-t})Q_t(q^{t}),
        \end{align}
  where the equality holds if and only if $a=t$ or $b=t$.

  Now assume that the statement holds for all $\ell \in \{2,\ldots, \ell'\}$ for some integer $\ell' \geq 2$. We will show that the statement also holds for $\ell=\ell'+1$. 
  Note that $\sum_{i=1}^{\ell'} n_i=n-n_{\ell}$. By the assumption, we have
  \begin{align}
    Q_t(q^{n_1})Q_t(q^{n_2})\cdots Q_t(q^{n_{\ell'}})Q_t(q^{n_{\ell}}) & \geq Q_t(q^t)^{\ell'-1}Q_t(q^{n-n_\ell-(\ell'-1)t})Q_t(q^{n_{\ell}}) \nonumber \\
    & \geq Q_t(q^t)^{\ell'-1} Q_t(q^{n-n_\ell+(n_{\ell}-t)-(\ell'-1)t}) Q_t(q^{t}) \label{eq:apply-ell=two}\\
    & =  Q_t(q^t)^{\ell'} Q_t(q^{n-\ell't})\nonumber\\
    & =  Q_t(q^t)^{\ell-1} Q_t(q^{n-(\ell-1)t})\nonumber \ ,
  \end{align}
  where \eqref{eq:apply-ell=two} follows from \eqref{eq:inequality-ell=two} and by induction the statement holds for all $\ell\in\bbN$.
\end{proof}

\begin{theorem}
\label{thm:prob-full-sum-rank}
Let $m,n,t,\ell\in\bbN$ and $q$ be a prime power. Let $\bn_\ell= (n_1, \dots,n_\ell)$ be an ordered partition of $n$ where $m\geq t$ and $n_i\geq t,\forall i\in[\ell]$. 
If $\bA$ is randomly drawn from the set $\{\bM \in \Fq^{m\times n}\ |\ \rank(\bM)=t\}$ (with uniform probability distribution), then 
\begin{align*}
    \Pr[\wtSR{\bn_{\ell}}(\bA)=\ell t\ |\ \rank(\bA)=t]     
     &\geq  
     \prod_{i=1}^{t-1}\frac{(1-q^{i-t})^{\ell - 1} (1-q^{i-n+(\ell-1)t})}{1-q^{i-n}}\ .
  \end{align*}
  The equality holds if and only if $\bn_{\ell}=(t, t,\dots,t, n-(\ell-1)t)$ (or any permutation).
\end{theorem}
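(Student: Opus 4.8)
The plan is to turn the conditional probability into a ratio of two exact counts, evaluate both counts with the counting lemmas already established, and then apply \Cref{lem:minimal-ordered partition} to obtain the lower bound and its equality case. The conceptual heart of the argument is a combinatorial characterization of the favorable event. Since the columns of each block $\bA_i$ form a sub-collection of the columns of $\bA$, we have $\myspan{\bA_i}_{\sfc}\subseteq\myspan{\bA}_{\sfc}$ and hence $\rank(\bA_i)\le\rank(\bA)=t$ for every $i$. Consequently $\wtSR{\bn_\ell}(\bA)=\sum_i\rank(\bA_i)\le \ell t$, with equality if and only if $\rank(\bA_i)=t$ for all $i$. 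The key observation is that, under the conditioning $\rank(\bA)=t$, this occurs exactly when all block column spaces coincide: each $\myspan{\bA_i}_{\sfc}$ is then a $t$-dimensional subspace sitting inside the $t$-dimensional space $\myspan{\bA}_{\sfc}=\sum_i\myspan{\bA_i}_{\sfc}$, which forces $\myspan{\bA_1}_{\sfc}=\cdots=\myspan{\bA_\ell}_{\sfc}=\myspan{\bA}_{\sfc}$; conversely, if all blocks share one $t$-dimensional column space $V$, then automatically $\rank(\bA)=t$ and $\wtSR{\bn_\ell}(\bA)=\ell t$.

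This characterization decouples the favorable count into an independent choice per block. First I would count the favorable matrices by choosing the common column space $V\in\Gr_t(\Fq^m)$, in $\gb{m}{t}_q$ ways, and then choosing each block $\bA_i$ independently among the $m\times n_i$ matrices with column space exactly $V$; by \Cref{lem:num-matrices-fixed-column-space} there are $Q_t(q^{n_i})$ such matrices. The denominator is the total number of rank-$t$ matrices, which by \Cref{lem:number-matrices-rank-r} equals $\gb{m}{t}_q\,Q_t(q^n)$. The Grassmannian factor $\gb{m}{t}_q$ cancels, leaving the \emph{exact} conditional probability
\[
\Pr[\wtSR{\bn_\ell}(\bA)=\ell t \mid \rank(\bA)=t]=\frac{\prod_{i=1}^{\ell}Q_t(q^{n_i})}{Q_t(q^n)}.
\]

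It then remains to bound and simplify this ratio. Applying \Cref{lem:minimal-ordered partition} to the numerator yields $\prod_{i=1}^{\ell}Q_t(q^{n_i})\ge Q_t(q^t)^{\ell-1}Q_t(q^{n-(\ell-1)t})$, with equality precisely for the partition $(t,\dots,t,n-(\ell-1)t)$ and its permutations; since the denominator is fixed, this equality case transfers directly to the equality case of the theorem. Finally I would rewrite each factor through the elementary factorization $Q_t(q^N)=\prod_{i=0}^{t-1}(q^N-q^i)=q^{Nt}\prod_{i=0}^{t-1}(1-q^{i-N})$. The monomial prefactors $q^{Nt}$ coming from numerator and denominator cancel (one checks the exponents sum to $nt$ on each side), and collecting the remaining factors over the index $i$ produces the product form $\prod_{i}\frac{(1-q^{i-t})^{\ell-1}(1-q^{i-n+(\ell-1)t})}{1-q^{i-n}}$ asserted as the lower bound.

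The only genuinely non-routine step is the structural characterization in the first paragraph, namely the equivalence, under the rank constraint, between attaining the maximal sum-rank weight $\ell t$ and all blocks sharing a single column space. This is exactly what collapses an otherwise awkward conditional count into a clean product over blocks; once it is in place, the remainder is an application of \Cref{lem:num-matrices-fixed-column-space}, \Cref{lem:number-matrices-rank-r}, and \Cref{lem:minimal-ordered partition} together with the factorization of $Q_t$. I would therefore spend the most care verifying the "only if" direction of the characterization (that coincidence of column spaces is forced, not merely sufficient) and confirming the cancellation of prefactors in the final algebraic simplification.
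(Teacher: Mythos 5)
Your proposal is correct and follows essentially the same route as the paper's own proof: the identical characterization of the favorable event as all block column spaces coinciding with $\myspan{\bA}_{\sfc}$, the identical exact count via \cref{lem:num-r-dim-subspaces}, \cref{lem:num-matrices-fixed-column-space} and \cref{lem:number-matrices-rank-r} with the Grassmannian factor cancelling, and the same application of \cref{lem:minimal-ordered partition} for the bound and its equality case. One remark on your final algebraic step, which you wisely left with an unspecified index range: carrying out the factorization $Q_t(q^N)=q^{Nt}\prod_{i=0}^{t-1}(1-q^{i-N})$ carefully, the prefactors cancel \emph{per factor} (each $i$ contributes $q^{t(\ell-1)}\cdot q^{n-(\ell-1)t}=q^n$ in the numerator against $q^n$ in the denominator), yielding $\prod_{i=0}^{t-1}\frac{(1-q^{i-t})^{\ell-1}(1-q^{i-n+(\ell-1)t})}{1-q^{i-n}}$, so the range $i=1,\dots,t-1$ appearing in the theorem statement drops the nontrivial $i=0$ factor --- an off-by-one in the paper (visible already at $t=1$, $\ell=2$, where the stated product is empty but the exact probability is $\frac{(q-1)(q^{n-1}-1)}{q^n-1}<1$), not a flaw in your argument.
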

\begin{proof}
    Given the restriction $\rank(\bA) = t$, $\wtSR{\bn_{\ell}}(\bA)=\ell t$ holds if and only if the column space of $\bA_i$ coincides with the column space of $\bA$, for all $i\in[\ell]$.
  Hence, the number of matrices $\bA$ such that $\rank(\bA_i)=t, \forall i\in[\ell]$ is
  \begin{align*}
    \frac{Q_t(q^m)}{Q_{t}(q^t)}\cdot Q_t(q^{n_1})Q_t(q^{n_2})\cdots Q_t(q^{n_\ell}),
  \end{align*}
  where $\frac{Q_t(q^m)}{Q_{t}(q^t)}$ is the number of $t$-dim subspaces of $\Fq^m$ (\cref{lem:num-r-dim-subspaces}) and $Q_t(q^{n_i})$ is the number of $m\times n_i$ matrices with column space equal to a fixed $t$-dim subspace $V\subseteq\Fq^m$ (\cref{lem:num-matrices-fixed-column-space}). 

  Recall from \cref{lem:number-matrices-rank-r} that the number of $m\times n$ matrices of rank $t$ is
  \begin{align*}
    \frac{Q_t(q^m)Q_t(q^n)}{Q_t(q^t)}\ .
  \end{align*}
  Then,
  \begin{align*}
    \Pr[\wtSR{\bn_{\ell}}(\bA)=\ell t\ |\ \rank(\bA)=t] &= \frac{\frac{Q_t(q^m)}{Q_{t}(q^t)}\cdot Q_t(q^{n_1})Q_t(q^{n_2})\cdots Q_t(q^{n_\ell})}{\frac{Q_t(q^m)Q_t(q^n)}{Q_t(q^t)}}\\
    &= \frac{Q_t(q^{n_1})Q_t(q^{n_2})\cdots Q_t(q^{n_\ell})}{Q_t(q^n)}\\
     &\geq  \frac{Q_t(q^t)^{\ell-1}Q_t(q^{n-(\ell-1)t})}{Q_t(q^n)}\\
     &=  \prod_{i=1}^{t-1}\frac{(q^{t}-q^i)^{\ell - 1} (q^{n-(\ell-1)t}-q^i)}{q^{n}-q^i}\\
     &=  \prod_{i=1}^{t-1}\frac{(1-q^{i-t})^{\ell - 1} (1-q^{i-n+(\ell-1)t})}{1-q^{i-n}}\ ,
  \end{align*}
  where the inequality follows from \cref{lem:minimal-ordered partition}.
\end{proof}
Denote by $(a;x)_{\infty}$ the $q$-Pochhammer symbol $\prod_{j=0}^{\infty}(1-ax^j)$. It can be shown that $(\frac{1}{q};\frac{1}{q})_{\infty}$ converges quickly to 1 as $q \to \infty$, e.g.,
\[
 (\tfrac{1}{2};\tfrac{1}{2})_{\infty} \approx 0.289,\quad  (\tfrac{1}{3};\tfrac{1}{3})_{\infty} \approx 0.560,\quad (\tfrac{1}{7};\tfrac{1}{7})_{\infty} \approx 0.837,\quad (\tfrac{1}{13};\tfrac{1}{13})_{\infty} \approx 0.917,\quad \text{etc.}
\]
Furthermore, by Euler's pentagonal theorem, we have 
\begin{align*}
(\tfrac{1}{q};\tfrac{1}{q})_{\infty} = \prod_{i=1}^{\infty} \left(1-\left(\frac{1}{q}\right)^i\right) = 1-\frac{1}{q}-\frac{1}{q^2}+\frac{1}{q^5}+\frac{1}{q^7}-\frac{1}{q^{12}}-\frac{1}{q^{15}}+\dots.
\end{align*}
Then, for $q \geq 2$, we have
$(\frac{1}{q};\frac{1}{q})_{\infty} > 1-\frac{1}{q}-\frac{1}{q^2} $ . 
\begin{corollary}
    Following the same notation as \cref{thm:prob-full-sum-rank}, we have
\begin{align*}
    \Pr[\wtSR{\bn_{\ell}}(\bA)=\ell t\ |\ \rank(\bA)=t]     
     &> \left(1-\frac{1}{q}-\frac{1}{q^2}\right)^{\ell}\ .
  \end{align*}
  In particular, if $q>\ell$, i.e., $\ell\leq q-1$, we have
  \begin{align*}
      \Pr[\wtSR{\bn_{\ell}}(\bA)=\ell t\ |\ \rank(\bA)=t] >\frac{1}{4}\ .
  \end{align*}
\end{corollary}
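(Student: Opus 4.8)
The plan is to feed the lower bound of \cref{thm:prob-full-sum-rank} into the $q$-Pochhammer estimate recorded just before the corollary, treating the two asserted inequalities in turn. For the first, I would start from the bound $\Pr[\cdots]\geq \frac{Q_t(q^t)^{\ell-1}Q_t(q^{\,n-(\ell-1)t})}{Q_t(q^n)}$ furnished by \cref{thm:prob-full-sum-rank} together with \cref{lem:minimal-ordered partition}, and rewrite each factor as $Q_t(q^a)=q^{ta}\prod_{j=a-t+1}^{a}(1-q^{-j})$. The powers of $q$ cancel exactly (the exponents sum to $t[\,t(\ell-1)+(n-(\ell-1)t)-n\,]=0$), leaving
\[
\Pr[\cdots]\ \geq\ \frac{\bigl(\prod_{j=1}^{t}(1-q^{-j})\bigr)^{\ell-1}\,\prod_{j=k-t+1}^{k}(1-q^{-j})}{\prod_{j=n-t+1}^{n}(1-q^{-j})},\qquad k:=n-(\ell-1)t,
\]
where $k\geq t$ since $n\geq \ell t$, so every index $j$ appearing is at least $1$.

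Next I would make two observations. The denominator is a product of $t\geq 1$ factors each in $(0,1)$, hence strictly less than $1$, so discarding it strictly increases the bound. Moreover each of the two surviving products is a finite sub-product of the full $q$-Pochhammer product $(\frac1q;\frac1q)_\infty=\prod_{j=1}^{\infty}(1-q^{-j})$; since every omitted factor lies in $(0,1)$, each sub-product strictly exceeds $(\frac1q;\frac1q)_\infty$. Combining this with the estimate $(\frac1q;\frac1q)_\infty>1-\frac1q-\frac1{q^2}$ for $q\geq 2$ (already established from Euler's pentagonal number theorem) gives that both $\prod_{j=1}^{t}(1-q^{-j})$ and $\prod_{j=k-t+1}^{k}(1-q^{-j})$ exceed $1-\frac1q-\frac1{q^2}$. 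Multiplying these positive quantities yields the first claim $\Pr[\cdots]>(1-\frac1q-\frac1{q^2})^{\ell}$, strictly thanks to the discarded denominator.

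For the second inequality I would invoke $\ell\leq q-1$. Because $0<1-\frac1q-\frac1{q^2}<1$, the map $\ell\mapsto(1-\frac1q-\frac1{q^2})^{\ell}$ is decreasing, so the first part already gives $\Pr[\cdots]>(1-\frac1q-\frac1{q^2})^{q-1}$. It then remains to prove the analytic inequality $H(q):=(1-\frac1q-\frac1{q^2})^{q-1}\geq\frac14$ for every integer $q\geq 2$. I would establish this by showing that the real function $H(x)=(1-\frac1x-\frac1{x^2})^{x-1}$ is increasing on $[2,\infty)$: differentiating $\ln H(x)=(x-1)\ln(1-\frac1x-\frac1{x^2})$ reduces positivity of $(\ln H)'$ to $M(x):=\frac{(x-1)(x+2)}{x(x^2-x-1)}>-\ln\!\bigl(1-\tfrac{x+1}{x^2}\bigr)$. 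Bounding the right-hand side via $-\ln(1-u)\leq u+\frac{u^2}{2(1-u)}$ with $u=\frac{x+1}{x^2}$ converts this into a rational inequality that collapses to $x>1$, valid throughout. Hence $H$ is increasing and $H(q)\geq H(2)=\frac14$.

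The main obstacle is exactly this last monotonicity step, because the bound is tight at the boundary: at $q=2$ the constraint forces $\ell=1$ and $H(2)=\frac14$ with equality, so no crude estimate such as $1-\frac1q-\frac1{q^2}\geq(1-\frac1q)^2$ will do—that route only yields $1/e^2<\frac14$ in the limit. The delicate point is selecting a logarithmic bound sharp enough near $q=2$ yet simple enough to reduce to a polynomial inequality, and the second-order estimate $-\ln(1-u)\leq u+\frac{u^2}{2(1-u)}$ is precisely what makes it work. The threshold case $q=2$, $\ell=1$ gives only $H(2)=\frac14$, but the \emph{strict} inequality produced in the first part upgrades $\Pr[\cdots]\geq\frac14$ to the desired $\Pr[\cdots]>\frac14$, closing the argument for all prime powers $q$.
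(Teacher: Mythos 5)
Your proof is correct and takes essentially the same route as the paper: discard the denominator factors $1-q^{i-n}<1$ for strictness, bound each remaining finite product below by the full $q$-Pochhammer product $(\frac{1}{q};\frac{1}{q})_\infty>1-\frac{1}{q}-\frac{1}{q^2}$, and then use $\ell\leq q-1$ together with monotonicity of $(1-\frac{1}{q}-\frac{1}{q^2})^{q-1}$ to reduce to $q=2$. The one place you go beyond the paper is welcome: the paper merely asserts that $(1-\frac{1}{q}-\frac{1}{q^2})^{q-1}$ increases in $q$, whereas your calculus argument actually proves it, and your reduction is correct --- after clearing denominators the inequality collapses to $(x-1)^2>0$.
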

\begin{proof}
    It follows from \cref{thm:prob-full-sum-rank} that
    \begin{align*}
    \Pr[\wtSR{\bn_{\ell}}(\bA)=\ell t\ |\ \rank(\bA)=t]     
     &\geq \prod_{i=1}^{t-1}\frac{(1-q^{i-t})^{\ell - 1} (1-q^{i-n+(\ell-1)t})}{1-q^{i-n}}\\
     &> \prod_{i=1}^{t-1}(1-q^{i-t})^{\ell - 1} (1-q^{i-n+(\ell-1)t})\\
     &> \prod_{j=1}^{\infty}(1-q^{-j})^{\ell} = (\tfrac{1}{q};\tfrac{1}{q})_{\infty}^{\ell} > \left(1-\frac{1}{q}-\frac{1}{q^2}\right)^{\ell}.
    \end{align*}
    If $\ell\leq q-1$, we have $\left(1-\frac{1}{q}-\frac{1}{q^2}\right)^{\ell} \geq \left(1-\frac{1}{q}-\frac{1}{q^2}\right)^{q-1}$, which increases as $q$ increases. Therefore,
    \begin{align*}
        \Pr[\wtSR{\bn_{\ell}}(\bA)=\ell t\ |\ \rank(\bA)=t] > \left(1-\frac{1}{q}-\frac{1}{q^2}\right)^{q-1} \Big|_{q=2} = \frac{1}{4}\ .
    \end{align*}
\end{proof}


\bibliographystyle{plain}
\bibliography{biblio}

\end{document}